\newtheorem{theorem}{Theorem}[section]
\newtheorem{corollary}[theorem]{Corollary}
\numberwithin{equation}{section}
\begin{document}

\title{Hedging and machine learning driven crude oil data analysis using a refined Barndorff-Nielsen and Shephard model}
\author{ Humayra Shoshi \footnote{Email: humayra.shoshi@ndsu.edu}, Indranil SenGupta\footnote{Email: indranil.sengupta@ndsu.edu} \\ Department of Mathematics \\ North Dakota State University \\ Fargo, North Dakota, USA.}
\date{\today}
\maketitle
\begin{abstract}

In this paper, a refined Barndorff-Nielsen and Shephard (BN-S) model is implemented to find an optimal hedging strategy for commodity markets. The refinement of the BN-S model is obtained with various machine and deep learning algorithms. The refinement leads to the extraction of a deterministic parameter from the empirical data set. The problem is transformed to an appropriate classification problem with a couple of different approaches- the volatility approach and the duration approach. The analysis is implemented to the Bakken crude oil data and the aforementioned deterministic parameter is obtained for a wide range of data sets. With the implementation of this parameter in the refined model, the resulting model performs much better than the classical BN-S model.

\end{abstract}
\textsc{Key Words:} Variance swaps, Quadratic hedging, Drawdown, Classification problems, Stochastic models. 
\section{Introduction}
\label{sec1}

Price risk in commodity trading refers to fluctuation in the price of asset. To reduce price risk, traders hedge the commodity price with commodity derivatives such as futures, options, or swaps. Hedging is an act of taking opposite position in the similar market to reduce the price risk. With appropriate hedging of the underlying position, the loss from one market is offset by another market.

A commodity of fundamental importance is crude oil. Consequently a study of the fluctuation of crude oil price time series is of utmost importance (see \cite{Grinblatt, He}). This allows to evaluate the potential impacts of its shocks in several economies and on other financial assets. In \cite{Tabak}, the authors analyze the efficiency of crude oil markets by means of estimating the fractal structure of these time series. In \cite{Sensoy}, it is shown that the efficiency of energy futures markets is time-varying and changes drastically over the sample period. In particular, for futures contracts with one to four months to maturities, crude oil and gasoline are found to be more efficient compared to others. In  \cite{Sircar1}, the authors discuss how the traditional oil producers may react in counter-intuitive ways in face of competition from alternative energy sources. The paper considers the big decline in oil prices, from around \$110 per barrel in June 2014 to less than \$40 in March 2016, and shows the significance of competition between different energy sources. With the ongoing COVID-19 pandemic situation, this analysis is very relevant. 
In \cite{Roberts}, the authors present a sequential hypothesis testing on two streams of observations that are driven by L\'evy processes. After that, machine learning algorithms are implemented to analyze the oil price dynamics for the Bakken region in the United States.

A frequently used stochastic volatility model for the commodity market analysis is the Barndorff-Nielsen and Shephard (BN-S) model (see \cite{BN1, BN-S1, BN-S2, BJS, Semere, Issaka}). This model finds various applications in the derivative and commodity market. In recent literature, the BN-S model is implemented to find an optimal hedging strategy for the oil commodity (see \cite{SWW, Wil2}).  In \cite{Tat}, the BN-S model is implemented to the analysis of the S\&P 500 market using a $K$-component mixture of regressions model. In spite of having a lot of advantages, the classical BN-S model has some major disadvantages including a ``short-range dependence". In the paper \cite{recent}, a refinement of the BN-S model is proposed. It is shown that a machine learning driven refined BN-S model can be used as an improvement of the classical BN-S model.  In the recent paper \cite{Hui}, it is shown that the refined BN-S model in a fuzzy environment solves the long-term dependence problem of the classical model, and thus effectively analyzes the random dynamic characteristics of stock index option price time series.
All these analyses are further improved in \cite{Roberts1}, where a machine learning driven sequential hypothesis testing is implemented to refine the BN-S model. In both the papers (\cite{Roberts1} and \cite{recent}), machine learning based techniques are implemented for extracting a \emph{deterministic component}  from the commodity price processes. Also, the refined BN-S model is shown to incorporate \emph{long range dependence} without actually changing the model.

In this paper, we investigate the refinement of the BN-S model by analyzing the underlying data set with a couple of different approaches- (1) volatility approach, and (2) duration approach. In effect, these approaches provide a ``jump-detection technique" for a financial time series. The papers \cite{D1, D2, V1, V2}, discuss various motivations for these approaches. In \cite{D1}, it is observed that by fitting the log-periodic power law equation to a financial time series, it is possible to predict the event of a crash. The paper investigates the financial crisis of 2008, with the log-periodic power law. In \cite{D2}, drawdowns, defined as the loss from the last local maximum to the next local minimum, is introduced. It is shown that drawdowns can be used as a natural measure of real market risks than the variance, the value-at-risk or other measures based on fixed time scale distributions of returns. It is shown that very large drawdowns belong to a different class of their own and call for a specific amplification mechanism. In \cite{V1}, drawdowns are implemented and crashes are classified as either events of an endogenous origin preceded by speculative bubbles or as events of exogenous origins associated to external shocks. However, the proposed classification does not rule out the existence of other precursory signals in the absence of so-called log-periodic power law signatures. In \cite{V2}, the price volatility before, during, and after financial asset bubbles are investigated for possible commonalities. It is also empirically investigated whether volatility may be used as an indicator or an early warning signal of an unsustainable price increase and the associated crash.

The organization of the paper is as follows. In Section \ref{sec2}, a refined BN-S model is presented. Some useful properties of variance swaps  with respect to the refined BN-S model is studied. In addition, a quadratic hedging procedure is discussed. In Section \ref{sec3}, the data set is provided, and then two procedures, the volatility approach and  the duration approach, in the classification problem are introduced. Various numerical results are also provided in that section. Finally, a brief conclusion is provided in Section \ref{sec4}.

\section{Refined Barndorff-Nielsen and Shephard model and related results}
\label{sec2}

Many models in recent literature try to capture the stochastic behavior of time series. For example, in the case of the Barndorff-Nielsen and Shephard (BN-S) model, the stock or commodity price $S= (S_t)_{t \geq 0}$ on some risk-neutral filtered probability space is modeled by

\begin{equation}
\label{1}
S_t= S_0 \exp (X_t),
\end{equation}
\begin{equation}
\label{2}
dX_t = b_t \,dt + \sigma_t\, dW_t + \rho \,dZ_{\lambda t},  \quad \text{with} \quad b_t=  (r- \lambda \kappa (\rho)- \frac{1}{2} \sigma_t^2),
\end{equation}
\begin{equation}
\label{3}
d\sigma_t ^2 = -\lambda \sigma_t^2 \,dt + dZ_{\lambda t}, \quad \sigma_0^2 >0,
\end{equation}
where the parameters $ \rho, \lambda \in \mathbb{R}$ with $\lambda >0$, and $\rho \leq 0$. Here $r$ is the risk-free interest rate where a stock or commodity is traded up to a fixed horizon date $T$. In the expression for $b_t$, the cumulant transform for $Z_1$ under the new measure is denoted as $\kappa(\cdot)$. In this model $W_t$ is a Brownian motion and the process $Z_{t}$ is a subordinator. 
For a refined BN-S model (see \cite{recent}) the stock or commodity price $S= (S_t)_{t \geq 0}$ on some risk-neutral filtered probability space $(\Omega, \mathcal{F}, (\mathcal{F}_t)_{0 \leq t \leq T}, \mathbb{Q})$ is modeled by \eqref{1}, with

\begin{equation}
\label{2new}
dX_t = b_t \,dt + \sigma_t\, dW_t + \rho\left( (1-\theta) \,dZ_{\lambda t}+ \theta dZ^{(b)}_{\lambda t}\right), 
\end{equation}
where $\theta \in [0,1]$ is a \emph{deterministic} parameter, and $b_t$ is given by \eqref{2}. Machine learning algorithms are implemented to determine the value of $\theta$. The process $Z^{(b)}$ in \eqref{2new} is a subordinator that is independent of $Z$. In addition, $Z^{(b)}$ has greater intensity than the subordinator $Z$. $W$, $Z$ and $Z^{(b)}$ are assumed to be independent, and $(\mathcal{F}_t)$ is assumed to be the usual augmentation of the filtration generated by $(W, Z, Z^{(b)} )$. 

When empirical financial data is considered for a long duration of time, it is observed that the ``big" fluctuations can only be modeled poorly with the incorporation of a single jump term. This is one of the known disadvantages of the classical BN-S model. Consequently, in order to model those ``big" fluctuations for a longer period of time, it is natural to use two or more L\'evy processes. Consequently, as a simple case, we incorporate two jump terms- $Z$ and $Z^{(b)}$- for the revised BN-S model.

In this case \eqref{3} is given by
\begin{equation}
\label{4new}
d\sigma_t ^2 = -\lambda \sigma_t^2 \,dt + (1- \theta') dZ_{\lambda t} + \theta' dZ^{(b)}_{\lambda t} , \quad \sigma_0^2 >0,
\end{equation}
where, as before, $\theta' \in [0,1]$ is \emph{deterministic}. For simplicity, we assume $\theta= \theta'$ for the rest of this paper. 

As shown in \cite{recent}, the dynamics given by \eqref{1}, \eqref{2new}, and \eqref{4new} incorporates a long-range dependence. If the jump measures associated with the subordinators $Z$ and $Z^{(b)}$ are $J_Z$ and $J^{(b)}_Z$ respectively, and $J(s)= \int_0^s \int_{\mathbb{R}^+} J_Z(\lambda d\tau, dy)$, $J^{(b)}(s)= \int_0^s \int_{\mathbb{R}^+} J^{(b)}_Z(\lambda d\tau, dy)$; then for the log-return of the improved BN-S model given by \eqref{1}, \eqref{2new}, and \eqref{4new},
\begin{align}
\label{corrBNSimproved}
\text{Corr}(X_t, X_s)= \frac{\int_0^s \sigma_{\tau}^2 d\tau + \rho^2 (1-\theta)^2 J(s) + \rho^2 \theta^2 J^{(b)}(s)}{ \sqrt{\alpha(t) \alpha(s)}},
\end{align}
for $t>s$, where
$\alpha(\nu) = \int_0^{\nu} \sigma_{\tau}^2 d\tau + \nu\rho^2 \lambda ((1-\theta)^2 \text{Var}(Z_1)+  \theta^2 \text{Var}(Z^{(b)}_1)) $.

We observe that the solution of \eqref{4new} can be written as
\begin{equation}
\label{se43}
\sigma_t^2= e^{-\lambda t}\sigma_0^2 + (1-\theta)\int_0^t e^{-\lambda (t-s)}\, dZ_{\lambda s}+ \theta \int_0^t e^{-\lambda (t-s)}\, dZ^{(b)}_{\lambda s}.
\end{equation}
This enforces \emph{positivity} of $\sigma_t^2$. Thus, the process $\sigma_t^2$ is strictly positive and it is bounded from below by the deterministic function $e^{-\lambda t}\sigma_0^2$. The instantaneous variance of log returns is given by $$(\sigma_t^2+  \rho^2 (1-\theta)^2\lambda \text{Var}[Z_1]+ \rho^2 \theta^2\lambda \text{Var}[Z^{(b)}_1])\,dt,$$ and therefore simple calculation shows that the continuous realized variance in the interval $[0,T]$ is
\begin{equation*}
\sigma_R^2=\frac{1}{T}\int_0^T \sigma_t^2\,dt + \rho^2 (1-\theta)^2 \lambda \text{Var}[Z_1]+ \rho^2 \theta^2 \lambda \text{Var}[Z^{(b)}_1].
\end{equation*}

For the rest of this section we develop a procedure to show an effective hedging algorithm using the refined BN-S model. In Subsection \ref{sub21}, we briefly introduce some results related to the variance swap. In Subsection \ref{sub22} we develop results related to hedging algorithm where variance swaps and some \emph{specific} options are used.

\subsection{Variance swap pricing with the refined BN-S model}
\label{sub21}
A variance swap is  a forward contract on realized variance (see \cite{Semere, Semere2, Issaka}). The payoff of variance swap at the maturity $T$ is given by  $N(\sigma_R^2-\text{K}_{\text{Var}})$,
where $K_\text{Var}$ is the annualized delivery price or exercise price of the variance swap, and $N$ is the notional amount of the dollars per annualized volatility point squared. 
Without loss of generality we take $N=1$. The arbitrage free price of the variance swap is the expectation of the present value of the payoff in the risk-neutral world and it is given by $E^{\mathbb{Q}}\left[e^{-r(T-t)}(\sigma_R^2-K_\text{Var})| \mathcal{F}_t \right]$,  $0 \leq t \leq T$, where $\mathcal{F}_t$ is the $\sigma$-field generated by the history of the process up to time $t$. When $\mathcal{F}_t$ is given and $s \geq t$ a similar derivation as in \eqref{se43} gives
\begin{equation}
\label{se4300}
\sigma_s^2= e^{-\lambda (s-t)}\sigma_t^2 + (1-\theta) \int_t^s e^{-\lambda (s-u)}\, dZ_{\lambda u}+ \theta \int_t^s e^{-\lambda (s-u)}\, dZ^{(b)}_{\lambda u}.
\end{equation}

We denote $V_t= \int_0^t\sigma_u^2\,du$. For a fixed horizon date $T$, we consider $P_\text{Var}(t, \sigma_t^2, V_t)$ as a function of $t$, $\sigma_t^2$ and  $V_t$ with the final condition (independent of $S$) given by
\begin{equation*}
P_\text{Var}(T, \sigma_T^2, V_T)=  \sigma_R^2-\text{K}_{\text{Var}}= \frac{V_T}{T}- \text{K}_{\text{Var}}.
\end{equation*}
Using \eqref{se4300} we obtain 
\begin{align} 
\label{se4400}
\sigma_R^2 & = \frac{1}{T}\int_0^T \sigma_s^2\,ds + \rho^2 (1-\theta)^2 \lambda \text{Var}[Z_1] + \rho^2 \theta^2 \lambda \text{Var}[Z^{(b)}_1] \nonumber \\
&= \frac{1}{T}\left(\int_0^t \sigma_s^2\,ds + \int_t^T \sigma_s^2\,ds \right)+\rho^2 (1-\theta)^2 \lambda \text{Var}[Z_1] + \rho^2 \theta^2 \lambda \text{Var}[Z^{(b)}_1] \nonumber \\
&= \frac{1}{T} \left(V_t + \frac{1}{\lambda}(1-e^{-\lambda (T-t)})\sigma_t^2+\frac{1-\theta}{\lambda}\int_t^T\left(1-e^{-\lambda(T-s)}\right)dZ_{\lambda s} + \frac{\theta}{\lambda}\int_t^T\left(1-e^{-\lambda(T-s)}\right)dZ^{(b)}_{\lambda s} \right) \nonumber \\
& + \rho^2 (1-\theta)^2 \lambda \text{Var}[Z_1] + \rho^2 \theta^2 \lambda \text{Var}[Z^{(b)}_1].
\end{align}

Based on this result we can prove the following theorem.
\begin{theorem}
\label{arbitr1}
The arbitrage free price of the variance swap, with respect to the risk neutral measure $\mathbb{Q}$, is given by
\begin{align*}
& \text{P}_{\text{Var}}(t, \sigma_t^2, V_t) = e^{-r(T-t)} \left[V_t + (T-t) \left(\kappa_1(1-\theta)+ \kappa_1^{(b)}\theta\right) \right.\\
& \left.+ \frac{1}{\lambda}\left(1-e^{-\lambda (T-t)}\right)\left(\sigma_t^2- \kappa_1(1-\theta)- \kappa_1^{(b)}\theta\right) 
+ \rho^2 (1-\theta)^2 \lambda \kappa_2 + \rho^2 \theta^2 \lambda \kappa_2^{(b)}-\text{K}_{\text{Var}}\right],
\end{align*}
where $\kappa_1$ and $\kappa_2$ are the first cumulant (i.e., the expected value) and  the second cumulant (i.e., the variance) of $Z_1$ respectively; and $\kappa^{(b)}_1$ and $\kappa^{(b)}_2$ are the first cumulant (i.e., the expected value) and  the second cumulant (i.e., the variance) of $Z^{(b)}_1$ respectively.
\end{theorem}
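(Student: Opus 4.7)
The plan is to evaluate the discounted conditional expectation $E^{\mathbb{Q}}[e^{-r(T-t)}(\sigma_R^2-K_{\text{Var}})\mid\mathcal{F}_t]$ by substituting the decomposition \eqref{se4400}. The only terms in that decomposition that are not already $\mathcal{F}_t$-measurable or deterministic are the two stochastic integrals $\int_t^T(1-e^{-\lambda(T-s)})\,dZ_{\lambda s}$ and $\int_t^T(1-e^{-\lambda(T-s)})\,dZ^{(b)}_{\lambda s}$, so the whole calculation reduces to computing their conditional means and then collecting terms.

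To handle those integrals I would split each subordinator into its deterministic drift and a compensated (pure-jump) martingale piece, writing $Z_{\lambda s}=\lambda\kappa_1 s+M_s$ and $Z^{(b)}_{\lambda s}=\lambda\kappa_1^{(b)} s+M^{(b)}_s$. Since the integrand $1-e^{-\lambda(T-s)}$ is deterministic and bounded, the martingale parts contribute zero conditional mean, while the drift parts give
\begin{equation*}
E^{\mathbb{Q}}\left[\int_t^T(1-e^{-\lambda(T-s)})\,dZ_{\lambda s}\,\bigg|\,\mathcal{F}_t\right]=\lambda\kappa_1\int_t^T(1-e^{-\lambda(T-s)})\,ds=\lambda\kappa_1\left[(T-t)-\frac{1}{\lambda}(1-e^{-\lambda(T-t)})\right],
\end{equation*}
and the analogous identity with $\kappa_1$ replaced by $\kappa_1^{(b)}$.

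Next I would plug these back into \eqref{se4400}. The two contributions proportional to $(1-e^{-\lambda(T-t)})/\lambda$ will combine with the $\sigma_t^2$ term to produce $\frac{1}{\lambda}(1-e^{-\lambda(T-t)})\bigl(\sigma_t^2-(1-\theta)\kappa_1-\theta\kappa_1^{(b)}\bigr)$, while the two contributions proportional to $(T-t)$ will give $(T-t)\bigl((1-\theta)\kappa_1+\theta\kappa_1^{(b)}\bigr)$. Adding the already $\mathcal{F}_t$-measurable $V_t$ term together with the deterministic variance adjustments $\rho^2(1-\theta)^2\lambda\kappa_2+\rho^2\theta^2\lambda\kappa_2^{(b)}$, subtracting $K_{\text{Var}}$, and multiplying by the discount factor $e^{-r(T-t)}$ will produce the stated formula.

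The only mildly delicate point I anticipate is justifying the drift/martingale decomposition of the subordinator integrated against a deterministic integrand, which requires finiteness of the first cumulants $\kappa_1,\kappa_1^{(b)}$; but this is already built in since the second cumulants $\kappa_2,\kappa_2^{(b)}$ appear in the statement and are therefore assumed to exist, and the integrand is uniformly bounded on $[t,T]$. Everything else is linearity of conditional expectation plus the elementary computation $\int_t^T(1-e^{-\lambda(T-s)})\,ds=(T-t)-\frac{1}{\lambda}(1-e^{-\lambda(T-t)})$, so I do not expect any real obstacle beyond careful bookkeeping of the four terms.
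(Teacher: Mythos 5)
Your proposal is correct and follows essentially the same route as the paper: take the conditional expectation of \eqref{se4400}, replace each stochastic integral $\int_t^T(1-e^{-\lambda(T-s)})\,dZ_{\lambda s}$ by $\kappa_1\int_t^T(1-e^{-\lambda(T-s)})\lambda\,ds$ (and likewise for $Z^{(b)}$), evaluate the elementary integral, and collect terms. The only difference is that you spell out the drift/compensated-martingale decomposition justifying that replacement, which the paper leaves implicit.
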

\begin{proof}
The conditional expected value, given $\mathcal{F}_t$, of  equation \eqref{se4400} gives the value
\begin{align}
\label{se49}
E(\sigma_R^2| \mathcal{F}_t)&= \frac{1}{T} (V_t+ \frac{1}{\lambda}(1-e^{-\lambda (T-t)})\sigma_t^2+\frac{(1-\theta)\kappa_1}{\lambda}  \int_t^T\left(1-e^{-\lambda(T-s)}\right)\lambda\, ds \nonumber \\
&+ \frac{\theta \kappa^{(b)}_1}{\lambda}  \int_t^T\left(1-e^{-\lambda(T-s)}\right)\lambda\, ds) + \rho^2 (1-\theta)^2 \lambda \text{Var}[Z_1] + \rho^2 \theta^2 \lambda \text{Var}[Z^{(b)}_1] \nonumber \\
&=\frac{1}{T}(V_t+\frac{1}{\lambda}(1-e^{-\lambda (T-t)})\sigma_t^2+\kappa_1 (1-\theta)\left(T-t-\frac{1}{\lambda}\left(1-e^{-\lambda (T-t)}\right)\right) \nonumber \\
& +\kappa_1^{(b)}\theta\left(T-t-\frac{1}{\lambda}\left(1-e^{-\lambda (T-t)}\right)\right) )+ \rho^2 (1-\theta)^2 \lambda \kappa_2 + \rho^2 \theta^2 \lambda \kappa_2^{(b)}.
\end{align}
Hence the theorem follows from simplification of \eqref{se49}.
\end{proof}

\subsection{Quadratic optimal hedging strategy}
\label{sub222222}

In this subsection we provide a brief formal introduction to optimal hedging strategy in terms of quadratic hedging. Quadratic hedging is a hedging strategy which minimizes the hedging error in the mean square sense. Some basic cases for the quadratic hedging are provided in \cite{cont, Wil2}. Consider a risk-neutral measure $\mathbb{Q}$, and assume that $(S_t)_{t \in [0,T]}$ given by $S_t= \exp(rt+ X_t)$, where $X_t$ is a L\'evy process on $(\Omega, \mathcal{F}, \mathcal{F}_t, \mathbb{Q})$. Let $W_t$ be a Brownian motion with respect to $\mathbb{Q}$. In the following we denote the discounted functions with ``hats". We also denote the risk-free interest rate by $r$. For instance, discounted commodity price will be given by $\hat{S}_t= e^{-rt}S_t$. 

The process $\hat{S}_t$ can be written as the stochastic exponential of another L\'evy process $Z_t$ as $d \hat{S}_t= \hat{S}_{t-}\,dZ_t$, where $Z$ is a martingale with jumps greater than $-1$. Consider a self-financing strategy $(\phi_t^0, \phi_t)_{t \in [0,T]}$. The terminal payoff of such strategy is given by 
\begin{equation*}
G_T(\phi)= \int_0^T r \phi_t^0 \,dt + \int_0^T \phi_t S_{t-} \,dZ_t. 
\end{equation*}
If the jump measure associate with $X$ is given by $J_X(\cdot, \cdot)$, then as obtained in \cite{Wil2},
\begin{align*}
\hat{G}_T(\phi) =  \int_0^T \phi_t S_{t-} \sigma dW_t +\int_0^T \int_{\mathbb{R}} \tilde{J}_Z (dt, dx) (e^z-1)\phi_t S_{t-}. 
\end{align*}
Denote $\hat{\mathcal{S}}= \{ \phi \text{  predictable  and  } E| \int_0^T \phi_t\, d\hat{S}_t|^2 < \infty \}.$ Given the initial capital $\Pi_0$, and a random variable $\mathcal{H}$, the quadratic hedging problem is given by (see \cite{cont})
\begin{equation}
\label{qhedge}
\inf_{\phi \in L^2(\hat{S})} E^{\mathbb{Q}} |\hat{G}_T(\phi)+ \Pi_0- \hat{\mathcal{H}}|^2. 
\end{equation}
From this, it follows that the hedging error is $\hat{G}_T(\phi)+ \Pi_0- \hat{\mathcal{H}}$. 
From the construction of $\hat{G}_T(\phi)$, we obtain $E^{\mathbb{Q}}(\hat{G}_T(\phi))=0$. Consequently, the expectation of hedging error is $\Pi_0- E^{\mathbb{Q}}  [\hat{\mathcal{H}}]$. Thus, the optimal value for the initial capital is $\Pi_0= E^{\mathbb{Q}}  [\hat{\mathcal{H}}]$.

\subsection{Quadratic hedging under the refined BN-S model}
\label{sub22}

In this subsection, we show that there is an effective hedging procedure in relation to the refined BN-S model given by \eqref{1}, \eqref{2new}, and \eqref{4new}. With respect to $\mathbb{Q}$, the dynamics of $S_t$ is given by
\begin{align}
\label{stockdyn}
\frac{dS_t}{S_t} = rdt + \sigma_t\,dW_t + \int_{\mathbb{R}_{+}}(e^{\rho (1-\theta) x}-1)\tilde{J}_{Z}(\lambda dt, dx)+ \int_{\mathbb{R}_{+}}(e^{\rho \theta x}-1)\tilde{J}_{Z^{(b)}}(\lambda dt, dx),
\end{align}
where we assume that random measures associated with the jumps of $Z$ and $Z^{(b)}$, and L\'evy densities of $Z$ and $Z^{(b)}$ are given by $J_Z$, $J_{Z^{(b)}}$, and $\nu_Z$, $\nu_{Z^{(b)}}$, respectively. The compensator for $J_Z (\lambda dt, dx)$ is given by $\lambda \nu_Z(dx) \,dt$ and we define $\tilde{J}_Z (\lambda dt, dx)= J_Z (\lambda dt, dx)-\lambda \nu_Z(dx) \,dt$. Similarly, the compensator for $J_{Z^{(b)}}(\lambda dt, dx)$ is given by $\lambda \nu_Z^{(b)}(dx) \,dt$ and we define $\tilde{J}_{Z^{(b)}} (\lambda dt, dx)= J_{Z^{(b)}} (\lambda dt, dx)-\lambda \nu_{Z^{(b)}}(dx) \,dt$. 

As introduced in \cite{SWW} and \cite{Wil2}, we consider a ``stable" commodity $Y_t$ given by (with respect to $\mathbb{Q}$) a geometric Brownian motion
\begin{equation}
\label{Yt}
dY_t= Y_t(r\,dt+ \sigma \,d\tilde{W}_t),
\end{equation}
with $d\tilde{W}_t \cdot dW_t= \rho'\,dt$, with $W_t$ defined in \eqref{2new} (same as in \eqref{stockdyn}), and $\sigma>0$ a constant.

\begin{theorem}
\label{biggy1}
Consider a European option with payoff $H(Y_T)$ where $H: \mathbb{R}_{+} \to \mathbb{R}$. 
Then the risk-minimizing quadratic hedge amounts to holding a position of the underlying $S$ equal to $\phi_t= \Delta(t, S_t, Y_t)$, where
\begin{align}
\label{optphit}
\Delta(t, S_t, Y_t)= \frac{\rho' \sigma \sigma_t \frac{Y_t}{S_t} \frac{\partial C}{\partial Y}+A+B}{\sigma_t^2 +\lambda \int_{\mathbb{R}_{+}} (e^{\rho (1-\theta) x}-1)^2  \nu_Z(dx) + \lambda \int_{\mathbb{R}_{+}} (e^{\rho \theta x}-1)^2  \nu_Z^{(b)}(dx) },
\end{align}
where $C$ is the Black-Scholes price of the option written on $Y$, and
\begin{equation}
\label{a}
A= \frac{\lambda (1-\theta)}{S_t}\int_{\mathbb{R}_{+}} \left(P(t, \sigma_{t}^2+x, V_t)-P(t, \sigma_{t}^2, V_{t})\right)(e^{\rho (1-\theta) x}-1) \nu_Z(dx),
\end{equation}
\begin{equation}
\label{a1}
B= \frac{\lambda \theta}{S_t}\int_{\mathbb{R}_{+}} \left(P(t, \sigma_{t}^2+x, V_t)-P(t, \sigma_{t}^2, V_{t})\right)(e^{\rho \theta x}-1) \nu_Z^{(b)}(dx).
\end{equation}

\end{theorem}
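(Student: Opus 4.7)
The plan is to apply the risk-minimizing (F\"ollmer--Schweizer) quadratic hedging formula to the refined BN--S dynamics: for a discounted value process $\hat{\Pi}_t$ hedged by the discounted underlying $\hat{S}_t$, the optimal hedge ratio is
\begin{equation*}
\phi_t = \frac{d\langle \hat{\Pi},\hat{S}\rangle_t}{d\langle \hat{S}\rangle_t},
\end{equation*}
which follows from \eqref{qhedge} by projecting the hedging error onto the stable subspace generated by $\hat{S}$. Since the claim $H(Y_T)$ leaves the variance risk unhedged, the natural value process to project is $\Pi_t = C(t,Y_t) + P(t,\sigma_t^2,V_t)$, where $C$ is the Black--Scholes price of $H$ on the geometric Brownian motion $Y$ and $P$ is the variance swap price from Theorem \ref{arbitr1}.

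For the denominator, I would start from \eqref{stockdyn} and apply Ito's isometry separately to the Brownian integral and the two orthogonal compensated Poisson integrals. This yields
\begin{equation*}
\frac{d\langle \hat{S}\rangle_t}{\hat{S}_{t-}^{\,2}} = \left(\sigma_t^2 + \lambda\!\int_{\mathbb{R}_+}(e^{\rho(1-\theta)x}-1)^2\,\nu_Z(dx) + \lambda\!\int_{\mathbb{R}_+}(e^{\rho\theta x}-1)^2\,\nu_{Z^{(b)}}(dx)\right)dt,
\end{equation*}
which becomes the denominator in \eqref{optphit} once $\hat{S}_{t-}^{\,2}$ cancels against a matching factor arising in the numerator.

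For the numerator, I would expand $d\hat{\Pi}_t$ piece by piece. Since $Y$ is continuous and satisfies \eqref{Yt}, the Black--Scholes PDE turns $\hat{C}(t,Y_t) = e^{-rt}C(t,Y_t)$ into a local martingale with only a $d\tilde{W}_t$ component, so via $d\langle \tilde{W},W\rangle_t = \rho'\,dt$ it contributes the continuous piece $\rho'\sigma\sigma_t Y_t\,(\partial C/\partial Y)\,\hat{S}_{t-}\,e^{-rt}\,dt$ to $d\langle \hat{\Pi},\hat{S}\rangle_t$. The discounted variance swap price $\hat{P}(t,\sigma_t^2,V_t)$, by Ito on the jump dynamics \eqref{4new}, contributes two compensated Poisson integrals whose jump sizes are the $P$-increments $P(t,\sigma_{t-}^{\,2}+\cdot,V_t) - P(t,\sigma_{t-}^{\,2},V_t)$, rescaled by the $(1-\theta)$ and $\theta$ factors from \eqref{4new}. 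Pairing each against the matching compensated Poisson term in $d\hat{S}$ and discarding cross terms by orthogonality reproduces the $A$- and $B$-pieces of \eqref{a}--\eqref{a1} up to the common factor $\hat{S}_{t-}e^{-rt}$, which cancels upon dividing by $d\langle \hat{S}\rangle_t$.

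The main obstacle I anticipate is the careful bookkeeping of the jump scaling: a $Z$-jump of size $x$ drives a jump of size $\rho(1-\theta)x$ in $X_t$ (hence $e^{\rho(1-\theta)x}-1$ in $d\hat{S}/\hat{S}$) and a jump of size $(1-\theta)x$ in $\sigma_t^2$, and analogously with $\theta$ for $Z^{(b)}$. Keeping the $(1-\theta)$ and $\theta$ factors straight so that the compensators, the exponential integrands, and the $P$-increments line up with the $\lambda(1-\theta)$ and $\lambda\theta$ prefactors in \eqref{a}--\eqref{a1} is where the real algebra lives; once this is done and the orthogonality of $W$, $\tilde{J}_Z$, and $\tilde{J}_{Z^{(b)}}$ has been used to kill cross terms, the final quotient \eqref{optphit} is mechanical.
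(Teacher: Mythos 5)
Your proposal is correct and follows essentially the same route as the paper: you hedge the same combined target $C(t,Y_t)+P(t,\sigma_t^2,V_t)$, use the same orthogonal decomposition into the $W$, $\tilde{J}_Z$, $\tilde{J}_{Z^{(b)}}$ martingale terms, and your projection formula $\phi_t = d\langle \hat{\Pi},\hat{S}\rangle_t/d\langle \hat{S}\rangle_t$ is exactly the closed form of the paper's first-order condition \eqref{buf}, obtained there by expanding $E^{\mathbb{Q}}[\epsilon(\phi,\Pi_0)]^2$ via the isometry and differentiating pointwise in $\phi_t$. The only packaging difference is that the paper writes out the quadratic error explicitly before minimizing, while you invoke the normal equation directly; the resulting algebra, including the cancellation of the $\hat{S}_{t-}$ factors that produces the $Y_t/S_t$ and $1/S_t$ terms, is identical.
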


\begin{proof}
From \eqref{stockdyn}, it is clear that the discounted commodity price $\hat{S}_t= e^{-rt} S_t$ is a martingale with respect to $\mathbb{Q}$. We consider a self financing strategy $(\phi_t^0, \phi_t)$ with $\phi \in L^2(\hat{S})$. The discounted value of the portfolio ($\hat{\Pi}$) is then a martingale with terminal value given by
\begin{align}
\label{alig1}
\hat{\Pi}_T(\phi) & = \int_0^T \phi_t\, d\hat{S}_t \nonumber \\
& =  \int_0^T \phi_t \hat{S}_t \left(  \sigma_t\,dW_t + \int_{\mathbb{R}_{+}}(e^{\rho (1-\theta) x}-1)\tilde{J}_{Z}(\lambda dt, dx)+ \int_{\mathbb{R}_{+}}(e^{\rho \theta x}-1)\tilde{J}_{Z^{(b)}}(\lambda dt, dx)\right) \nonumber \\
& =  \int_0^T \phi_t \hat{S}_t \sigma_t\,dW_t +  \int_0^{T} \phi_t \hat{S}_t \left(\int_{\mathbb{R}_{+}}(e^{\rho (1-\theta) x}-1)\tilde{J}_{Z}(\lambda dt, dx)+ \int_{\mathbb{R}_{+}}(e^{\rho \theta x}-1)\tilde{J}_{Z^{(b)}}(\lambda dt, dx)\right).
\end{align}
The arbitrage-free price of the option written on the commodity $Y$ with payoff $H(Y_T)$ is given by
\begin{equation*}
C(t,Y)= e^{-r(T-t)}E^{\mathbb{Q}}[H(Y_T)|Y_t=Y].
\end{equation*}
We denote $\hat{C}(t,Y)= e^{-rt}C(t,Y)$ and $\Pi_{01}= \hat{C}(0,Y_0)= e^{-r T} E^{\mathbb{Q}}[H(Y_T)]$. Then, by It\^o formula we obtain 
\begin{align}
\label{alig2}
\hat{C}(t,Y_t)-\Pi_{01}= \int_0^t \frac{\partial C}{\partial Y}(u,Y_u) \hat{Y}_u\sigma \,d\tilde{W}_u.
\end{align}
On the other hand, if we consider a variance swap written on $S_t$, and denote $\hat{P}(t, \sigma_t^2,V_t)= e^{-rt}P(t, \sigma_t^2,V_t)$, $\hat{\tilde{P}}(t, \sigma_t^2,V_t)= e^{-rt}\tilde{P}(t, \sigma_t^2,V_t)$, and $\Pi_{02}= e^{-rT} \tilde{P}(0,\sigma_0^2, V_0)=P(0,\sigma_0^2, V_0)$, then, using It\^o formula we obtain:
\begin{align}
\label{alig3}
e^{-rT}\tilde{P}(t,\sigma_t^2, V_t)-\Pi_{02} & = (1-\theta) \int_0^{t} \int_{\mathbb{R}_{+}}\left(\hat{P}(s, \sigma_{s-}^2+x, V_s)-\hat{P}(s, \sigma_{s-}^2, V_{s})\right)  \tilde{J}_{Z}(\lambda ds, dx) \nonumber \\
& + \theta \int_0^{t} \int_{\mathbb{R}_{+}}\left(\hat{P}(s, \sigma_{s-}^2+x, V_s)-\hat{P}(s, \sigma_{s-}^2, V_{s})\right)  \tilde{J}_{Z^{(b)}}(\lambda ds, dx).
\end{align}
We denote $\Pi_0= \Pi_{01}+\Pi_{02}$, and $\epsilon(\phi, \Pi_0)= \hat{\Pi}_T(\phi)+ \Pi_{0}- \hat{C}(T,Y_T)- \hat{\tilde{P}}(T, \sigma_T^2 ,V_T)$. Note that $\tilde{P}(T,\sigma_T^2, V_T)= P(T,\sigma_T^2, V_T)$, and thus we have
\begin{equation}
\label{errorterm}
\epsilon(\phi, \Pi_0)= \hat{\Pi}_T(\phi)+ \Pi_{0}- \hat{C}(T,Y_T)- \hat{P}(T, \sigma_T^2 ,V_T).
\end{equation} Considering expressions in \eqref{alig2} and \eqref{alig3} at $t=T$, adding those, and subtracting from \eqref{alig1} we obtain
\begin{align*}
 \epsilon(\phi, \Pi_0) & =  \int_0^T \phi_t \hat{S}_t \sigma_t \,dW_t- \int_0^T \frac{\partial C}{\partial Y} \hat{Y}_t\sigma \,d\tilde{W}_t \\
& + \int_0^{T} \int_{\mathbb{R}_{+}}\left[\phi_t \hat{S}_t  (e^{\rho (1-\theta) x}-1)-(1-\theta)\left(\hat{P}(t, \sigma_{t}^2+x, V_t)-\hat{P}(t, \sigma_{t}^2, V_{t})\right) \right]\tilde{J}_{Z}(\lambda  dt, dx) \\
& + \int_0^{T} \int_{\mathbb{R}_{+}}\left[\phi_t \hat{S}_t  (e^{\rho \theta x}-1)-\theta\left(\hat{P}(t, \sigma_{t}^2+x, V_t)-\hat{P}(t, \sigma_{t}^2, V_{t})\right) \right]\tilde{J}_{Z^{(b)}}(\lambda  dt, dx).
\end{align*}

Using the isometry formula and observing $E^{\mathbb{Q}}[\epsilon(\phi, \Pi_0) ]=0$, we obtain the variance of $\epsilon(\phi, \Pi_0) $ as 

\begin{align*}
& E^{\mathbb{Q}}[\epsilon(\phi, \Pi_0)]^2  =  E^{\mathbb{Q}} \left[ \int_0^T \phi_t^2 \hat{S}_t^2  \sigma_t^2 \,dt\right]+ E^{\mathbb{Q}} \left[ \int_0^T \left(\frac{\partial C}{\partial Y}\right)^2 \hat{Y}_t^2 \sigma^2 \,dt \right]\\
& + E^{\mathbb{Q}} \left[\int_0^{T} \int_{\mathbb{R}_{+}}\left[\phi_t \hat{S}_t  (e^{\rho (1-\theta) x}-1)-(1-\theta) \left(\hat{P}(t, \sigma_{t}^2+x, V_t)-\hat{P}(t, \sigma_{t}^2, V_{t})\right)\right]^2 \lambda  \nu_Z(dx)\,dt \right] \\
&  + E^{\mathbb{Q}} \left[\int_0^{T} \int_{\mathbb{R}_{+}}\left[\phi_t \hat{S}_t  (e^{\rho \theta x}-1)- \theta \left(\hat{P}(t, \sigma_{t}^2+x, V_t)-\hat{P}(t, \sigma_{t}^2, V_{t})\right)\right]^2 \lambda  \nu_Z^{(b)}(dx)\,dt \right] \\
&- E^{\mathbb{Q}} \left[ 2 \rho' \sigma \int_0^T \phi_t \hat{S}_t \hat{Y}_t  \sigma_t \frac{\partial C}{\partial Y}  \,dt\right].
\end{align*}
The optimal (risk-minimizing) hedge is obtained by minimizing this expression with respect to $\phi_t$. Differentiating the quadratic expression we obtain the first order condition
\begin{align}
\label{buf}
& 2 \phi_t \hat{S}_t^2  \sigma_t^2-2\rho' \sigma \hat{S}_t \hat{Y}_t  \sigma_t \frac{\partial C}{\partial Y} \nonumber \\
&+ 2 \int_{\mathbb{R}_{+}}\left[\phi_t \hat{S}_t  (e^{\rho (1-\theta) x}-1)- (1-\theta)\left(\hat{P}(t, \sigma_{t}^2+x, V_t)-\hat{P}(t, \sigma_{t}^2, V_{t})\right)\right]\hat{S}_t  (e^{\rho (1-\theta) x}-1)\lambda  \nu_Z(dx) \nonumber \\
&+ 2 \int_{\mathbb{R}_{+}}\left[\phi_t \hat{S}_t  (e^{\rho \theta x}-1)- \theta\left(\hat{P}(t, \sigma_{t}^2+x, V_t)-\hat{P}(t, \sigma_{t}^2, V_{t})\right)\right]\hat{S}_t  (e^{\rho \theta x}-1)\lambda  \nu_Z^{(b)}(dx)=0.
\end{align}
Also, in this case the second order condition is positive, which confirms the minimization. Solution of \eqref{buf} is given by \eqref{optphit}.
\end{proof}

We conclude this section with the application of the above result to an explicit case when $P(t, \sigma_t^2, V_t)$ is given by Theorem \ref{arbitr1}.

\begin{corollary}
\label{biggy10001}
Consider the refined BN-S model given by \eqref{1}, \eqref{2new} and \eqref{4new} (with $\theta'=\theta$). Consider a European option with payoff $H(Y_T)$ where $H: \mathbb{R}_{+} \to \mathbb{R}$.
Then the risk-minimizing quadratic hedge amounts to holding a position of the underlying $S$ equal to $\phi_t= \Delta(t, S_t, Y_t)$, where
\begin{equation}
\label{optphit123}
 \Delta(t, S_t, Y_t)= \frac{\rho' \sigma \sigma_t \frac{Y_t}{S_t} \frac{\partial C}{\partial Y}+ A +B}{\sigma_t^2 +\lambda \int_{\mathbb{R}_{+}} (e^{\rho (1-\theta) x}-1)^2  \nu_Z(dx) + \lambda \int_{\mathbb{R}_{+}} (e^{\rho \theta x}-1)^2  \nu_Z^{(b)}(dx) }
\end{equation}
where $C$ is the Black-Scholes price of the option written on $Y$,and \\

$A= \frac{(1-\theta)}{S_t}e^{-r(T-t)}(1-e^{-\lambda(T-t)}) \int_{\mathbb{R}_{+}} x(e^{\rho (1-\theta) x}-1)  \nu_Z(dx) $ \\

$B= \frac{\theta}{S_t} e^{-r(T-t)}(1-e^{-\lambda(T-t)}) \int_{\mathbb{R}_{+}} x(e^{\rho \theta x}-1)  \nu_Z^{(b)}(dx)$. 
\end{corollary}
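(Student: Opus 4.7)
The plan is to invoke Theorem \ref{biggy1} directly and specialize the generic expressions \eqref{a} and \eqref{a1} for $A$ and $B$ by substituting the explicit variance-swap pricing formula $P(t,\sigma_t^2,V_t)$ obtained in Theorem \ref{arbitr1}. Since the denominator in \eqref{optphit} and the first term in the numerator do not depend on the particular choice of option used for hedging, the entire task reduces to computing the increment $P(t,\sigma_t^2+x,V_t) - P(t,\sigma_t^2,V_t)$ in closed form.

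First, I would read off from Theorem \ref{arbitr1} which terms in the expression for $P_{\text{Var}}(t,\sigma_t^2,V_t)$ actually depend on $\sigma_t^2$. Inspecting the formula, the only $\sigma_t^2$-dependent term is $\frac{1}{\lambda}(1-e^{-\lambda(T-t)})\sigma_t^2$ sitting inside the overall prefactor $e^{-r(T-t)}$; every other quantity ($V_t$, the cumulant-based drift correction, the quadratic variation contributions, and $K_{\text{Var}}$) is either independent of $\sigma_t^2$ or depends only on $V_t$ and $t$. Shifting $\sigma_t^2 \mapsto \sigma_t^2 + x$ while holding $V_t$ and $t$ fixed therefore yields
\begin{equation*}
P(t,\sigma_t^2+x,V_t) - P(t,\sigma_t^2,V_t) = e^{-r(T-t)}\,\frac{1}{\lambda}\bigl(1-e^{-\lambda(T-t)}\bigr)\,x.
\end{equation*}
The same identity holds for the discounted price $\hat P$, since the discount factor $e^{-rt}$ cancels in the difference.

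Next I would substitute this identity into \eqref{a} and \eqref{a1}. In \eqref{a} the factor $\lambda$ in the prefactor cancels the $\frac{1}{\lambda}$ coming from the increment, leaving
\begin{equation*}
A = \frac{(1-\theta)}{S_t}\,e^{-r(T-t)}\bigl(1-e^{-\lambda(T-t)}\bigr)\int_{\mathbb{R}_{+}} x\bigl(e^{\rho(1-\theta)x}-1\bigr)\,\nu_Z(dx),
\end{equation*}
which is exactly the expression claimed in the corollary; the analogous computation for \eqref{a1} with $\nu_{Z^{(b)}}$ and the weight $\theta$ in place of $(1-\theta)$ produces the stated $B$. Plugging these $A$ and $B$ back into \eqref{optphit} yields the formula \eqref{optphit123}, completing the proof.

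I do not anticipate any genuine obstacle: Theorem \ref{biggy1} provides the structural hedge formula and Theorem \ref{arbitr1} is linear in $\sigma_t^2$, so the work is essentially an algebraic substitution. The only point requiring mild care is to verify that the discounting factor $e^{-r(T-t)}$ carries through correctly when moving between $P$ and $\hat P$ in \eqref{a}--\eqref{a1}, and that the integrability of $x(e^{\rho(1-\theta)x}-1)$ and $x(e^{\rho\theta x}-1)$ against $\nu_Z$ and $\nu_{Z^{(b)}}$ is guaranteed by the finite second-moment assumptions on the subordinators implicit in Theorem \ref{arbitr1} (via $\kappa_2,\kappa_2^{(b)}$) together with $\rho\leq 0$, so the integrals defining $A$ and $B$ are well-defined.
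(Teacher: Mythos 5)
Your proposal is correct and follows essentially the same route as the paper, which simply states that the corollary follows by applying Theorem \ref{biggy1} to the expressions for $A$ and $B$ in \eqref{optphit}; you have merely made explicit the linear dependence of $P_{\text{Var}}$ on $\sigma_t^2$ from Theorem \ref{arbitr1} and the resulting cancellation of $\lambda$, which is exactly the intended (but unstated) calculation.
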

\begin{proof}
The proof follows directly with the application of Theorem \ref{biggy1} in the expressions for the $A$ and $B$ in \eqref{optphit}. 
\end{proof}

\section{Data analysis}
\label{sec3}

In this section, at first in Subsection \ref{sec31}, we present an overview of the empirical data set.  After that, in Subsection \ref{sec32}, we develop a couple of  procedures for the data analysis. Finally, the results of the data analysis and the implication of the results for the refined BN-S model are presented in Subsection \ref{sec33}. The numerical results provided in this section are primarily related to the estimation of $\theta$ value in Section \ref{sec2}. As observed in recent papers (see \cite{SWW, Wil2}) an appropriate stochastic model improves heading algorithm. Consequently, an appropriate $\theta$ value improves the refined BN-S model and thus in effect improves the quadratic hedging error as described in Subsection \ref{sub22}. The goal of this section is to develop a data-driven method to find $\theta$ from an empirical data set. 

\subsection{Description of data}
\label{sec31}

We consider crude oil price data over a period of 7 years. We use the daily Bakken crude oil price data set for the period April 4, 2012 to July 11, 2017 (Figure 1). Bakken crude oil is related to the very significant North Dakota oil boom that refers to the period of rapidly expanding oil extraction from the Bakken formation that lasted from the discovery of Parshall Oil Field in 2006, and peaked in 2012. This is the primary source for which in recent years North Dakota is always in the list of top 5 oil producing states in the United States. 

There are a total of $1,329$ available data in this set. For convenience, we index the dates (for available data)  from 0 (for April 4, 2012) to 1328 (for  July 11, 2017). The following table (Table 1) summarizes various estimates for the data set. Figures 1, 2, and 3, show various characterization of the data set.

\begin{table}[H]
\centering
\caption{Properties of the empirical data set.}
  \begin{tabular}{ | l | c | r |}
    \hline
    & Daily Price Change & Daily Price Change \% \\ \hline
    Mean & -0.03787 & -0.02183 \% \\ \hline
    Median & -0.01000 &  0.019992 \% \\ \hline
Maximum & 7.40 &  15.05 \%\\     \hline
Minimum & -7.76 &  -15.36 \%\\     \hline
  \end{tabular}
\end{table}

\begin{figure}[H]
\centering
\caption{Line plot for the Bakken oil price from April 2012- July 2017.}
\includegraphics[scale=.5]{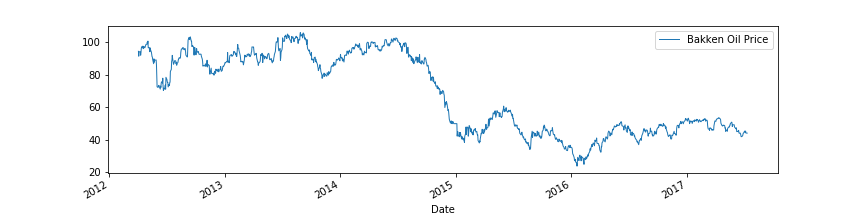}
\end{figure}

\begin{figure}[H]
\centering
\caption{Distribution plot for the Bakken oil price.}
\includegraphics[scale=.3]{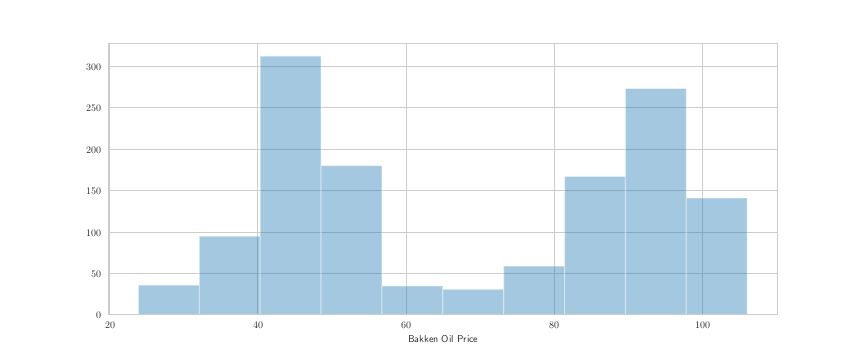}
\end{figure}

\begin{figure}[H]
\centering
\caption{Histogram for the Bakken oil price.}
\includegraphics[scale=.6]{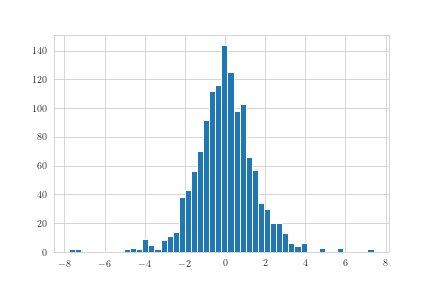}
\end{figure}

\subsection {Data analysis procedures}
\label{sec32}

For the data analysis we present here two different approaches, the aim of which is to find a $\theta$ with reasonable accuracy.
First, we implement the following procedure, naming it, \emph{Volatility Approach}, to create a \emph{classification problem} for the data set.

\begin{center}
\textbf{Volatility approach}
\end{center}

We work through the following steps (Step 1 through Step 7).
\begin{enumerate}
\item We conduct exploratory data analysis.

\item We consider the daily \textit{Bakken Oil Price} for the data, and we calculate the \textit{daily price change}  and the \textit{daily price log returns} using it. Using the \emph{daily price log returns} we calculate the \emph{realized variance} and the \emph{realized volatility}  respectively. 

\item We compute the realized volatility over 20 consecutive trading days for the oil prices. Since the computed realized volatility is very small, in order to properly utilize the volatility movements we create a new feature (column) that contains the realized volatility return in percentage, and we call it \textit{``realized volatility return in percentage"}. 

\item Using the \textit{realized volatility return in percentage} feature we  perform the following steps:
\item[4a.] We consider twenty consecutive days starting from index 0 (day 1) to index 19 (day 20). We compute the \emph{maximum} realized volatility return in percentage for those twenty trading days. We then try to identify \emph{realized volatility return in percentage} value(s), in those twenty trading days, which is strictly greater than or equal to the maximum. We assign $V=1$ if we find such values, otherwise $V=0$. 

\item[4b.] We continue step 4a for index 1 (day two) to index 20 (day twenty one) and so on respectively until we have checked through all the data points in our \emph{realized volatility return in percentage} feature. We call $V$ \emph{crash-like days}. 

\item We create a new data-frame from the old one where the features will be twenty consecutive daily change in prices. For example, if the daily change in prices are $$a_1, a_2, a_3, \cdots ,a_{18}, a_{19}, a_{20}, a_{21}, a_{22}, a_{23}, a_{24}, \cdots;$$ then the first row of the data set will contain $$a_1, a_2, a_3, \cdots ,a_{18}, a_{19}, a_{20};$$ second row of the data set will contain $$a_2, a_3, \cdots ,a_{18}, a_{19}, a_{20}, a_{21};$$

\item We create a target column for the new data-frame (as created in the preceding step) as follows: $\theta=1$ for those set of twenty Bakken oil prices that immediately precede $\mathit {at\, least\, 1 } $ (or more) crash-like days in the following twenty days. Otherwise we label the target column by $\theta=0.$

\item We run various \textit{classification algorithms} from machine learning where the input is the \textit{daily change in close price for twenty consecutive days} and output is $\theta$ -value (0 or 1). We evaluate the classification report and confusion matrix in each case.
\end{enumerate}

Figures 4-7 show various characterization of the data set related to the volatility approach described above. The purpose of the heatmap in Figure 4 is to better understand the realized volatility calculated over a period of twenty days for our entire data set. The goal is to use the numerical values and color pattern to observe any big changes for every month over the period of five years. As we can see that the realized volatility have very small values. This motivates us in computing realized volatility return in percentage. This is shown in Figure 5. For Figure 5, using the numerical values and color pattern from the heatmap we observe that over the five years the realized volatility return in percentage does not have any drastic change except for one outlier on July 2017.  Figure 6 and Figure 7 represent line plots which show us the jumps in the realized volatility return in percentage and the realized volatility over the five years, respectively. With the help of these figures we can see the highest jumps over the years, which also provides help in writing Step-4 of the above procedure.

\begin{figure}[H]
\centering
\caption{Heatmap for the realized volatility of the Bakken oil price over five years.}
\includegraphics[scale=.6]{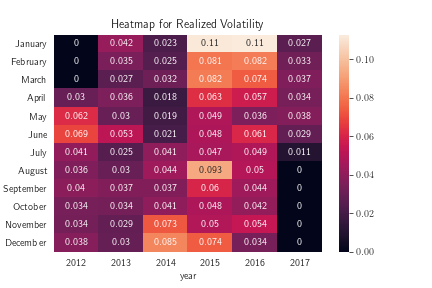}
\end{figure}

\begin{figure}[H]
\centering
\caption{Heatmap for the realized volatility return in percentage over the five years for the Bakken crude oil price. }
\includegraphics[scale=.6]{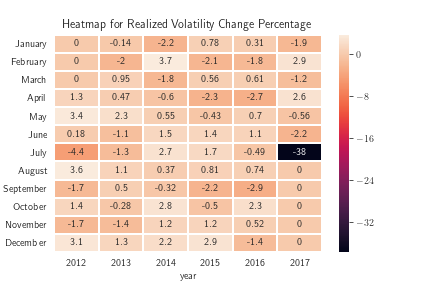}
\end{figure}

\begin{figure}[H]
\centering
\caption{Line plot for the realized volatility return in percentage for the Bakken oil price.}
\includegraphics[scale=.5]{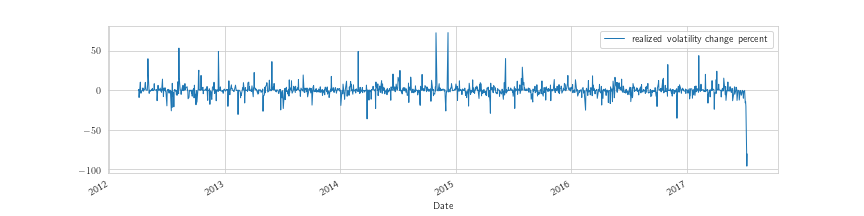}
\end{figure}

\begin{figure}[H]
\centering
\caption{Line Plot for the realized volatility of the Bakken oil price.}
\includegraphics[scale=.5]{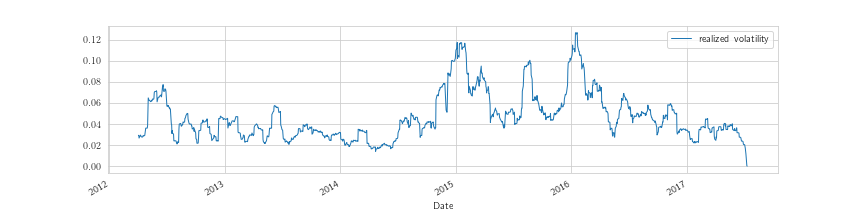}
\end{figure}

\begin{flushleft}

Next, we present the second approach to our data analysis. We implement the following procedure, naming it, \emph{Duration Approach}, to create a \emph{classification problem} for the data set.
\end{flushleft}

\begin{center}
\textbf{Duration approach}
\end{center}

We work through the following steps (Steps 1 through 7)

\begin{enumerate}
\item We conduct exploratory data analysis.

\item We consider the daily \emph{Bakken oil price} for the data. From the oil prices we calculate the \emph{daily change} and \emph{drawdowns} for the prices. A \emph{drawdown} is the total loss over consecutive days from the last maximum to the next minimum of the price. A drawdown occurring over $n$ days is described as $$d= \frac{p_{min}-p_{max}}{p_{max}}$$ with $$ p_{max} = p({t_1})> p({t_2})> \cdots > p({t_n})= p_{min}, $$ where $t_1, \cdots t_n,$ and $p(t_i)$ are the time period over \emph{n} days and \emph{Bakken oil} prices respectively.\\

It is to be noted that we will not include those prices in the drawdown calculation where the \emph{next} minimum price occurs at the beginning of the data set before the \emph{last} maximum price as well as the \emph{last} maximum price that occurs at the very end of the data set (for example- if \emph{first minimum} of the oil price is on day 5 (index 4) and \emph{first maximum} is on day 7 (index 6), we will drop the minimum price of day 5 from our computation.)

\item Our goal is to identify the dates when the drawdowns occurred in order to find the \emph{duration} of each drawdown, i.e. how long the drawdowns lasted. We will use the duration of the drawdowns as a measure to identify crash like days in our data set.

\item We fix a value for our duration, \emph{D}, and we obtain the drawdowns that lasted for that \emph{D} time period (for example, if our duration period is two days, i.e. \emph{D=2},  then we will search for drawdowns that lasted for two days (or more), and take note of their corresponding daily change prices.)

\item We create a new data-frame from the old one where the features (columns) will be ten consecutive daily change in oil prices. For example, if the daily change prices are $$a_1, a_2, a_3, \cdots , a_{8}, a_{9}, a_{10}, a_{11}, \cdots;$$ then the first row of the data set will contain $$a_1, a_2, a_3, a_4, a_5, a_6, a_7, , a_{8}, a_{9}, a_{10};$$ second row of the data set will contain $$a_2, a_3, a_{4}, a_5, a_6, a_7, a_8, a_{9}, a_{10}, a_{11};$$ etc.

\item We create a target column for the new data-frame (as created in the preceding step) as follows: $\theta=1$ for those set of ten daily change prices that immediately precede $\mathit {at\, least\, two \,drawdowns\, with\, duration\, D}$, in the following ten days. Otherwise we label the target column by $\theta=0.$

\item We run various \textit{classification algorithms} from machine learning where the input is the \textit{daily change in close price for ten consecutive days} and output is $\theta$ -value (0 or 1). We evaluate the classification report and confusion matrix in each case.
\end{enumerate}

Figures 8 and 9 show various characterization of the data set related to the duration approach described above. With the help of the bar graph in Figure 8, we can see that most drawdowns last for short period duration. For example, a more likely duration is of one or two days, compared to long duration of eight or nine days. In Figure 9, the spikes in the line plot give us some idea about the changes associated with the drawdowns in terms of duration (in number of days), over the period of five years. For example, between 2014 and 2015 most drawdowns lasted for one day or two days, and very few drawdowns went past four days.

\begin{figure}[H]
\centering
\caption{A bar graph to show the duration in the number of days for the drawdowns computed for the Bakken oil price.}
\includegraphics[scale=.4]{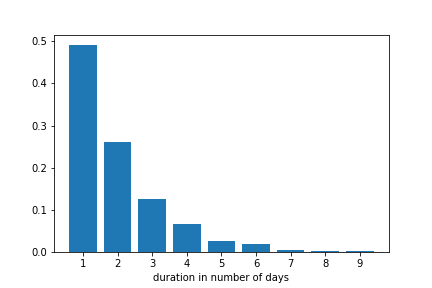}
\end{figure}

\begin{figure}[H]
\centering
\caption{A line plot to show the duration over the span of five years.}
\includegraphics[scale=.5]{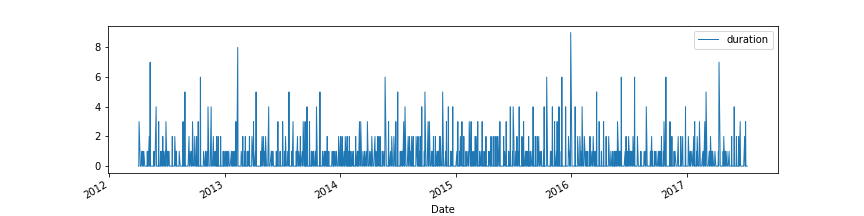}
\end{figure}

From these two approaches we will show that we can find $\theta$ with reasonable accuracy and use this for \eqref{2new}. In both the \emph{Volatility Approach} and the \emph{Duration Approach} the result can be improved by adjusting the number of days (in Step 5 for both) from twenty and ten respectively to a higher number. It is worth noting that the various deep learning models provide a value of $\theta$ between $0$ and $1$. In Step 6 (for both), we approximate that by $0$ or $1$. However, the actual value of $\theta$ may be directly used in \eqref{2new}.

We partition this data set in various ways. For each partition we use a \emph{train-test-split}, with respect to a given date.  For the analysis using the \emph{Volatility Approach} we use the \emph{maximum} to detect \emph{crash-like days} for each set of twenty data points, i.e. $\theta=1$ for the set of twenty daily change prices that immediately precede \emph{at least one crash-like days} (or more) in the following twenty days. Otherwise, we use $\theta=0$. For the analysis using the \emph{Duration Approach} we use \emph{D= 2}, i.e. $\theta=1$ for the set of ten daily change prices that immediately precede \emph{at least two drawdowns of duration D=2 } (or more) in the following ten days. Otherwise, we use $\theta=0$.

We run various \emph{supervised learning algorithms} on the crude oil price data. We begin with the \emph{logistic regression} (LR) and the \emph{random forest} (RF) classification of the data set. After that, we implement various \emph{deep learning} techniques:

\begin{enumerate}
\item[(A)] A \emph{neural network with three hidden layers} (with activation functions consisting of two $\tanh$ and one $\text{ReLU}$ respectively) and an output layer (with a $\text{softmax}$ activation function). For simplicity we approximate $\theta$ in \eqref{2new} with 0 (for example: ``duration with less than two days") and 1 (for example: ``duration with more than two days"). For these approximations, we take $\theta=1$ if the output probability for the $\text{softmax}$ activation function corresponding to $\theta=1$ is more than $0.4$.
\item[(B)] \emph{Long short-term memory} (LSTM) along with the neural network described in (A).  
\item[(C)] \emph{LSTM along with a batch normalizer} (BN) and the neural network described in (A).
\end{enumerate}

\subsection{Tables related to data analysis}
\label{sec33}

For the following tables, we provide classification reports for various machine learning algorithms. The \emph{support} is the number of samples of the true response that lie in that class. For the tables (Table 2 through Table 9), we provide classification reports for various machine learning algorithms using the Volatility Approach. For tables (Table 10 through Table 18) we provide classification reports for various machine learning algorithms using the Duration Approach. Finally, for tables (Table 19 through Table 21), we provide classification reports using both Volatility Approach and Duration Approach over the same training and testing dates.

As observed in \cite{recent}, to incorporate long range dependence, a single L\'evy subordinator is not effective for the BN-S model. If $\theta=1$ is obtained with the help of machine learning algorithms, we can modify the initial L\'evy subordinator ($Z$) with the L\'evy subordinator ($Z^{(b)}$) that corresponds to larger fluctuations. On the other hand if $\theta=0$ is obtained with the help of machine learning algorithms, we can modify the L\'evy subordinator $Z^{(b)}$ with $Z$. From the tables, it is obvious that the \emph{logistic regression} is less efficient in detecting $\theta =1$ based on the historical data. As observed in \cite{recent}, for the majority of the cases the neural network technique (A), LSTM (B), or the LSTM with a batch normalizer (C), work better than the random forest classifier. To avoid complexity, only three hidden layers are used. The results improve if the number of hidden layers is increased and also if the learning rate of the gradient descent method used is decreased.

After $\theta$ is obtained, its value can be implemented to \eqref{2new}. The machine learning algorithms can be performed on a real-time basis to continue or update with the background driving L\'evy process in the BN-S model. The analysis shows that for the Bakken oil price dynamics, the jump is \emph{not} completely stochastic. Similar to the results obtained in West Texas Intermediate (WTI or NYMEX) crude oil prices data set, as obtained in \cite{Roberts, Roberts1, recent}, there is a \emph{deterministic} component that can be implemented to apply the existing models for an extended period of time. By the deterministic component, it is meant that $\theta$ is a deterministic signal. This is deterministic in the sense that its value is extracted from the data before the model is implemented. Once the value of $\theta$ is obtained, it is kept constant for a certain period of time. Consequently, the refined BN-S model incorporates long term dependence without changing the tractability of the model.

\begin{table}[htbp]
  \centering
   \caption{Various estimations for \emph{training date(index)}: January 16, 2013(200) to June 11, 2013  (300); and \emph{testing date(index)}: June 12  (301) to July 10 (320).}
    \begin{tabular}{| c | c | c |c|c|c|}
    \hline
    & LR & RF & Neural Network (A) & LSTM (B) & BN (C) \\ \hline
Precision $\theta=0$ & 0.33  & 0.38  & 0.50   & 0.67  & 0.67 \\ \hline
Recall $\theta=0$ & 1.00     & 1.00     & 0.17  & 1.00     & 1.00 \\ \hline
f1-score $\theta=0$ & 0.50   & 0.55  & 0.25  & 0.80   & 0.80 \\ \hline
Support $\theta=0$ & 6     & 6     & 6     & 6     & 6 \\ \hline
Precision $\theta=1$ & 1.00     & 1.00     & 0.74  & 1.00     & 1.00 \\ \hline
Recall $\theta=1$ & 0.20   & 0.33  & 0.93  & 0.80   & 0.80 \\ \hline
f1-score $\theta=1$ & 0.33  & 0.50   & 0.82  & 0.89  & 0.89 \\ \hline
Support $\theta=1$ & 15    & 15    & 15    & 15    & 15 \\ \hline
    \end{tabular}%
  \label{tab:addlabel}%
\end{table}%

\begin{table}[htbp]
  \centering
  \caption{Various estimations for \emph{training date(index)}: February 5, 2014 (465) to June 2, 2014  (545); and \emph{testing date(index)}: June 3, 2014  (546) to July 8, 2014 (570).}
    \begin{tabular}{| c | c | c |c|c|c|}
    \hline
    & LR & RF & Neural Network (A) & LSTM (B) & BN (C) \\ \hline
    Precision $\theta=0$ & 0.32  & 0.35  & 0.17  & 0.40   & 0.36 \\ \hline
    Recall $\theta=0$ & 1.00     & 1.00     & 0.12  & 1.00     & 1.00 \\ \hline
    f1-score $\theta=0$ & 0.48  & 0.52  & 0.14  & 0.57  & 0.53 \\ \hline
    Support $\theta=0$ & 8     & 8     & 8     & 8     & 8 \\ \hline
    Precision $\theta=1$ & 1.00     & 1.00     & 0.65  & 1.00     & 1.00 \\ \hline
    Recall $\theta=1$ & 0.06  & 0.17  & 0.72  & 0.33  & 0.22 \\ \hline
    f1-score $\theta=1$ & 0.11  & 0.29  & 0.68  & 0.50   & 0.36 \\ \hline
    Support $\theta=1$ & 18    & 18    & 18    & 18    & 18 \\ \hline
    \end{tabular}%
  \label{tab:addlabel}%
\end{table}%

\begin{table}[htbp]
  \centering
 \caption{Various estimations for \emph{training date(index)}: June 9, 2015 (802) to December 9, 2015  (930); and \emph{testing date(index)}: December 10, 2015  (931) to February 8, 2016 (970). }
    \begin{tabular}{| c | c | c |c|c|c|}           \hline
    & LR & RF & Neural Network (A) & LSTM (B) & BN (C) \\ \hline
    Precision $\theta=0$ & 0.73  & 0.76  & 0.68  & 0.70   & 0.72 \\ \hline 
    Recall $\theta=0$ & 0.87  & 1.00     & 0.55  & 0.74  & 0.84 \\ \hline
    f1-score $\theta=0$ & 0.79  & 0.86  & 0.61  & 0.72  & 0.78 \\ \hline
    Support $\theta=0$ & 31    & 31    & 31    & 31    & 31 \\ \hline
    Precision $\theta=1$ & 0     & 0     & 0.12  & 0     & 0 \\ \hline
    Recall $\theta=1$ & 0 & 0 & 0.20   & 0 & 0 \\ \hline
    f1-score $\theta=1$ & 0     & 0     & 0.15  & 0     & 0 \\ \hline
    Support $\theta=1$ & 10    & 10    & 10    & 10    & 10 \\ \hline
    \end{tabular}%
  \label{tab:addlabel}%
\end{table}%

\begin{table}[htbp]
  \centering
 \caption{Various estimations for \emph{training date(index)}: November 10, 2015 (910) to March 8, 2016  (990); and \emph{testing date(index)}: March 9, 2016 (991) to April 8, 2016 (1012). }
    \begin{tabular}{| c | c | c |c|c|c|}
    \hline
    & LR & RF & Neural Network (A) & LSTM (B) & BN (C) \\ \hline
    Precision $\theta=0$ & 0.58  & 0.59  & 0.67  & 0.50   & 0.57 \\ \hline
    Recall $\theta=0$ & 0.50   & 0.93  & 0.43  & 0.64  & 0.86 \\ \hline
    f1-score $\theta=0$ & 0.54  & 0.72  & 0.52  & 0.56  & 0.69 \\ \hline
    Support $\theta=0$ & 14    & 14    & 14    & 14    & 14 \\ \hline
    Precision $\theta=1$ & 0.36  & 0     & 0.43  & 0     & 0 \\ \hline
    Recall $\theta=1$ & 0.44  & 0     & 0.67  & 0     & 0 \\ \hline
    f1-score $\theta=1$ & 0.40   & 0     & 0.52  & 0     & 0 \\ \hline
    Support $\theta=1$ & 9     & 9     & 9     & 9     & 9 \\ \hline
    \end{tabular}%
  \label{tab:addlabel}%
\end{table}%

\begin{table}[htbp]
  \centering
  \caption{Various estimations for \emph{training date(index)}: April 6, 2016 (1010) to August 5, 2016  (1095); and \emph{testing date(index)}: August 6, 2016  (1096) to September 6, 2016 (1116). }
    \begin{tabular}{| c | c | c |c|c|c|}
    \hline
    & LR & RF & Neural Network (A) & LSTM (B) & BN (C) \\ \hline
    Precision $\theta=0$ & 0.60   & 0.59  & 1.00     & 0.68  & 0.68 \\ \hline
    Recall $\theta=0$ & 0.92  & 1.00     & 0.08  & 1.00     & 1.00 \\ \hline
    f1-score $\theta=0$ & 0.73  & 0.74  & 0.14  & 0.81  & 0.81 \\ \hline
    Support $\theta=0$ & 13    & 13    & 13    & 13    & 13 \\ \hline
    Precision $\theta=1$ & 0.50   & 0     & 0.43  & 1.00     & 1.00 \\ \hline
    Recall $\theta=1$ & 0.11  & 0     & 1.00     & 0.33  & 0.33 \\ \hline
    f1-score $\theta=1$ & 0.18  & 0     & 0.60   & 0.50   & 0.50 \\ \hline
    Support $\theta=1$ & 9     & 9     & 9     & 9     & 9 \\ \hline
    \end{tabular}%
  \label{tab:addlabel}%
\end{table}%

\begin{table}[htbp]
  \centering
  \caption{Various estimations for \emph{training date(index)}:September 12, 2016 (1120) to January 12, 2017 (1205); and \emph{testing date(index)}: January 13, 2017  (1206) to February 17, 2017 (1230). }
    \begin{tabular}{| c | c | c |c|c|c|}          \hline
    & LR & RF & Neural Network (A) & LSTM (B) & BN (C) \\ \hline
    Precision $\theta=0$ & 0.58  & 0.58  & 0.50   & 0.61  & 0.53 \\ \hline
    Recall $\theta=0$ & 1.00     & 1.00     & 0.67  & 0.73  & 0.67 \\ \hline
    f1-score $\theta=0$ & 0.73  & 0.73  & 0.57  & 0.67  & 0.59 \\ \hline
    Support $\theta=0$ & 15    & 15    & 15    & 15    & 15 \\ \hline
    Precision $\theta=1$ & 0     & 0     & 0.17  & 0.50   & 0.29 \\ \hline
    Recall $\theta=1$ & 0     & 0     & 0.09  & 0.36  & 0.18 \\ \hline
    f1-score $\theta=1$ & 0     & 0     & 0.12  & 0.42  & 0.22 \\ \hline
    Support $\theta=1$ & 11    & 11    & 11    & 11    & 11 \\ \hline
    \end{tabular}%
  \label{tab:addlabel}%
\end{table}%

\begin{table}[htbp]
  \centering
  \caption{Various estimations for \emph{training date(index)}:October 24, 2016 (1150) to March 27, 2017 (1255); and \emph{testing date(index)}: March 28, 2017  (1256) to May 26, 2017 (1298) }
    \begin{tabular}{| c | c | c |c|c|c|}          \hline
    & LR & RF & Neural Network (A) & LSTM (B) & BN (C) \\ \hline
    Precision $\theta=0$ & 0.83  & 0.93  & 1.00     & 1.00     & 1.00 \\ \hline
    Recall $\theta=0$ & 0.38  & 0.69  & 0.26  & 0.13  & 0.28 \\ \hline
    f1-score $\theta=0$ & 0.53  & 0.79  & 0.41  & 0.23  & 0.44 \\ \hline
    Support $\theta=0$ & 39    & 39    & 39    & 39    & 39 \\ \hline
    Precision $\theta=1$ & 0.08  & 0.20   & 0.15  & 0.13  & 0.15 \\ \hline
    Recall $\theta=1$ & 0.40   & 0.60   & 1.00     & 1.00     & 1.00 \\ \hline
    f1-score $\theta=1$ & 0.13  & 0.30   & 0.26  & 0.23  & 0.26 \\ \hline
    Support $\theta=1$ & 5     & 5     & 5     & 5     & 5 \\ \hline
    \end{tabular}%
  \label{tab:addlabel}%
\end{table}%

\begin{table}[htbp]
  \centering
  \caption{Various estimations for \emph{training date(index)}: January 3, 2017 (1198) to March 20, 2017  (1250); and \emph{testing date(index)}: March 21, 2017  (1251) to April 18, 2017 (1270). }
    \begin{tabular}{| c | c | c |c|c|c|}          \hline
    & LR & RF & Neural Network (A) & LSTM (B) & BN (C) \\ \hline
    Precision $\theta=0$ & 0     & 0.50   & 0.12  & 0     & 0 \\ \hline
    Recall $\theta=0$ & 0     & 0.27  & 0.09  & 0     & 0 \\ \hline
    f1-score $\theta=0$ & 0     & 0.35  & 0.11  & 0     & 0 \\ \hline
    Support $\theta=0$ & 11    & 11    & 11    & 11    & 11 \\ \hline
    Precision $\theta=1$ & 0.15  & 0.47  & 0.23  & 0.35  & 0.27 \\ \hline
    Recall $\theta=1$ & 0.20   & 0.70   & 0.30   & 0.60   & 0.40 \\ \hline
    f1-score $\theta=1$ & 0.17  & 0.56  & 0.26  & 0.44  & 0.32 \\ \hline
    Support $\theta=1$ & 10    & 10    & 10    & 10    & 10 \\ \hline
    \end{tabular}%
  \label{tab:addlabel}%
\end{table}%

\begin{table}[htbp]
  \centering
  \caption{Various estimations for \emph{training date(index)}: January 16, 2013(200) to June 11, 2013  (300); and \emph{testing date(index)}: June 12  (301) to July 10 (320). }
    \begin{tabular}{| c | c | c |c|c|c|}          \hline
    & LR & RF & Neural Network (A) & LSTM (B) & BN (C) \\ \hline
Precision $\theta=0$ & 0.56  & 0.69  & 0.75  & 0.67  & 0.67 \\ \hline
Recall $\theta=0$ & 0.69  & 0.69  & 0.69  & 0.77  & 0.46 \\ \hline
f1-score $\theta=0$ & 0.62  & 0.69  & 0.72  & 0.71  & 0.55 \\ \hline
Support $\theta=0$ & 13    & 13    & 13    & 13    & 13 \\ \hline
Precision $\theta=1$ & 0.20   & 0.50   & 0.56  & 0.50   & 0.42 \\ \hline
Recall $\theta=1$ & 0.12  & 0.50   & 0.62  & 0.38  & 0.62 \\ \hline
f1-score $\theta=1$ & 0.15  & 0.50   & 0.59  & 0.43  & 0.50 \\ \hline
Support $\theta=1$ & 8     & 8     & 8     & 8     & 8 \\ \hline
    \end{tabular}%
  \label{tab:addlabel}%
\end{table}%

\begin{table}[htbp]
  \centering
  \caption{Various estimations for \emph{training date(index)}: January 14, 2014 (450) to June 16, 2014  (555); and \emph{testing date(index)}: June 17, 2014  (556) to August 12, 2014 (595). }
    \begin{tabular}{| c | c | c | c | c | c |}          \hline
& LR & RF & Neural Network (A) & LSTM (B) & BN (C) \\ \hline
Precision $\theta=0$ & 0.20   & 0.15  & 0.20   & 0.14  & 0.08 \\ \hline
Recall $\theta=0$ & 0.78  & 0.44  & 0.11  & 0.44  & 0.11 \\ \hline
f1-score $\theta=0$ & 0.32  & 0.23  & 0.14  & 0.21  & 0.10 \\ \hline
Support $\theta=0$ & 9     & 9     & 9     & 9     & 9 \\ \hline
Precision $\theta=1$ & 0.67  & 0.67  & 0.78  & 0.58  & 0.72 \\ \hline
Recall $\theta=1$ & 0.12  & 0.31  & 0.88  & 0.22  & 0.66 \\ \hline
f1-score $\theta=1$ & 0.21  & 0.43  & 0.82  & 0.32  & 0.69 \\ \hline
Support $\theta=1$ & 32    & 32    & 32    & 32    & 32 \\ \hline
    \end{tabular}%
  \label{tab:addlabel}%
\end{table}%

\begin{table}[htbp]
  \centering
  \caption{Various estimations for \emph{training date(index)}: January 12, 2015 (700) to April 16, 2015 (765); and \emph{testing date(index)}: April 17, 2015  (766) to May 14, 2015 (785). }
    \begin{tabular}{| c | c | c |c|c|c|}          \hline
    & LR & RF & Neural Network (A) & LSTM (B) & BN (C) \\ \hline
    Precision $\theta=0$ & 0.27  & 0.19  & 0     & 0.17  & 0 \\ \hline
    Recall $\theta=0$ & 0.75  & 0.75  & 0     & 0.50   & 0 \\ \hline
    f1-score $\theta=0$ & 0.40   & 0.30   & 0     & 0.25  & 0 \\ \hline
    Support $\theta=0$ & 4     & 4     & 4     & 4     & 4 \\ \hline
    Precision $\theta=1$ & 0.90   & 0.80   & 0.79  & 0.78  & 0.78 \\ \hline
    Recall $\theta=1$ & 0.53  & 0.24  & 0.88  & 0.41  & 0.82 \\ \hline
    f1-score $\theta=1$ & 0.67  & 0.36  & 0.83  & 0.54  & 0.80 \\ \hline
    Support $\theta=1$ & 17    & 17    & 17    & 17    & 17 \\ \hline
    \end{tabular}%
  \label{tab:addlabel}%
\end{table}%

\begin{table}[htbp]
  \centering
  \caption{Various estimations for \emph{training date(index)}: May 14, 2015 (785) to November 17, 2015  (915); and \emph{testing date(index)}: November 18, 2015  (916) to December 16, 2015 (935). }
    \begin{tabular}{| c | c | c |c|c|c|}          \hline
    & LR & RF & Neural Network (A) & LSTM (B) & BN (C) \\ \hline
    Precision $\theta=0$ & 0.43  & 0.54  & 0.40   & 0.38  & 0.42 \\ \hline
    Recall $\theta=0$ & 0.67  & 0.78  & 0.22  & 0.56  & 0.56 \\ \hline
    f1-score $\theta=0$ & 0.52  & 0.64  & 0.29  & 0.45  & 0.48 \\ \hline
    Support $\theta=0$ & 9     & 9     & 9     & 9     & 9 \\ \hline
    Precision $\theta=1$ & 0.57  & 0.75  & 0.56  & 0.50   & 0.56 \\ \hline
    Recall $\theta=1$ & 0.33  & 0.50   & 0.75  & 0.33  & 0.42 \\ \hline
    f1-score $\theta=1$ & 0.42  & 0.60   & 0.64  & 0.40   & 0.48 \\ \hline
    Support $\theta=1$ & 12    & 12    & 12    & 12    & 12 \\ \hline
    \end{tabular}%
  \label{tab:addlabel}%
\end{table}%

\begin{table}[htbp]
  \centering
  \caption{Various estimations for \emph{training date(index)}: January 8, 2016 (950) to March 8, 2016  (990); and \emph{testing date(index)}: March 9, 2016  (991) to March 30, 2016 (1005).}
    \begin{tabular}{| c | c | c |c|c|c|}          \hline
    & LR & RF & Neural Network (A) & LSTM (B) & BN (C) \\ \hline
    Precision $\theta=0$ & 0.75  & 0.67  & 0.80   & 0.89  & 1.00 \\ \hline
    Recall $\theta=0$ & 0.82  & 0.73  & 0.36  & 0.73  & 0.73 \\ \hline
    f1-score $\theta=0$ & 0.78  & 0.70   & 0.50   & 0.80   & 0.84 \\ \hline
    Support $\theta=0$ & 11    & 11    & 11    & 11    & 11 \\ \hline
    Precision $\theta=1$ & 0.50   & 0.25  & 0.36  & 0.57  & 0.62 \\ \hline
    Recall $\theta=1$ & 0.40   & 0.20   & 0.80   & 0.80   & 1.00 \\ \hline
    f1-score $\theta=1$ & 0.44  & 0.22  & 0.50   & 0.67  & 0.77 \\ \hline
    Support $\theta=1$ & 5     & 5     & 5     & 5     & 5 \\ \hline
    \end{tabular}%
  \label{tab:addlabel}%
\end{table}%

\begin{table}[htbp]
  \centering
 \caption{Various estimations for \emph{training date(index)}: April 1, 2016 (1007) to August 5, 2016  (1095); and \emph{testing date(index)}: August 6, 2016  (1096) to October 3, 2016 (1135). }
    \begin{tabular}{| c | c | c |c|c|c|}          \hline
    & LR & RF & Neural Network (A) & LSTM (B) & BN (C) \\ \hline
    Precision $\theta=0$ & 0.85  & 0.85  & 0.91  & 0.83  & 0.84 \\ \hline
    Recall $\theta=0$ & 0.97  & 0.97  & 0.60   & 0.83  & 0.91 \\ \hline
    f1-score $\theta=0$ & 0.91  & 0.91  & 0.72  & 0.83  & 0.88 \\ \hline
    Support $\theta=0$ & 35    & 35    & 35    & 35    & 35 \\ \hline
    Precision $\theta=1$ & 0     & 0     & 0.22  & 0     & 0 \\ \hline
    Recall $\theta=1$ & 0     & 0     & 0.67  & 0     & 0 \\ \hline
    f1-score $\theta=1$ & 0     & 0     & 0.33  & 0     & 0 \\ \hline
    Support $\theta=1$ & 6     & 6     & 6     & 6     & 6 \\ \hline
    \end{tabular}%
  \label{tab:addlabel}%
\end{table}%

\begin{table}[htbp]
  \centering
  \caption{Various estimations for \emph{training date(index)}: September 12, 2016 (1120) to December 13, 2016  (1185); and \emph{testing date(index)}:  December 14, 2016 (1185) to January 20, 2017 (1210). }
    \begin{tabular}{| c | c | c |c|c|c|}          \hline
    & LR & RF & Neural Network (A) & LSTM (B) & BN (C) \\ \hline
    Precision $\theta=0$ & 0.33  & 0.38  & 0.20   & 0.21  & 0.18 \\ \hline
    Recall $\theta=0$ & 0.40   & 0.50   & 0.20   & 0.30   & 0.30 \\ \hline
    f1-score $\theta=0$ & 0.36  & 0.43  & 0.20   & 0.25  & 0.22 \\ \hline
    Support $\theta=0$ & 10    & 10    & 10    & 10    & 10 \\ \hline
    Precision $\theta=1$ & 0.57  & 0.62  & 0.50   & 0.42  & 0.22 \\ \hline
    Recall $\theta=1$ & 0.50   & 0.50   & 0.50   & 0.31  & 0.12 \\ \hline
    f1-score $\theta=1$ & 0.53  & 0.55  & 0.50   & 0.36  & 0.16 \\ \hline
    Support $\theta=1$ & 16    & 16    & 16    & 16    & 16 \\ \hline
    \end{tabular}%
  \label{tab:addlabel}%
\end{table}%

\begin{table}[htbp]
  \centering
  \caption{Various estimations for \emph{training date(index)}: November 1, 2016 (1156) to February 1, 2017 (1218); and \emph{testing date(index)}: February 2, 2017  (1218) to March 1, 2017  (1237). }
    \begin{tabular}{| c | c | c |c|c|c|}          \hline
    & LR & RF & Neural Network (A) & LSTM (B) & BN (C) \\ \hline
    Precision $\theta=0$ & 0.80   & 0.76  & 0.86  & 0.93  & 1.00 \\ \hline
    Recall $\theta=0$ & 0.86  & 0.93  & 0.43  & 0.93  & 0.14 \\ \hline
    f1-score $\theta=0$ & 0.83  & 0.84  & 0.57  & 0.93  & 0.25 \\ \hline
    Support $\theta=0$ & 14    & 14    & 14    & 14    & 14 \\ \hline
    Precision $\theta=1$ & 0.60   & 0.67  & 0.38  & 0.83  & 0.33 \\ \hline
    Recall $\theta=1$ & 0.50   & 0.33  & 0.83  & 0.83  & 1.00 \\ \hline
    f1-score $\theta=1$ & 0.55  & 0.44  & 0.53  & 0.83  & 0.50 \\ \hline
    Support $\theta=1$ & 6     & 6     & 6     & 6     & 6 \\ \hline
    \end{tabular}%
  \label{tab:addlabel}%
\end{table}%

\begin{table}[htbp]
  \centering
 \caption{Various estimations for \emph{training date(index)}: January 5, 2017 (1200) to May 31, 2017  (1300); and \emph{testing date(index)}: June 1, 2017 (1301) to July 11, 2017 (1328). }
    \begin{tabular}{| c | c | c |c|c|c|}           \hline
    & LR & RF & Neural Network (A) & LSTM (B) & BN (C) \\ \hline
    Precision $\theta=0$ & 0.75  & 0.73  & 0.75  & 0.71  & 0.57 \\ \hline
    Recall $\theta=0$ & 1.00     & 0.73  & 0.40   & 0.67  & 0.27 \\ \hline
    f1-score $\theta=0$ & 0.86  & 0.73  & 0.52  & 0.69  & 0.36 \\ \hline
    Support $\theta=0$ & 15    & 15    & 15    & 15    & 15 \\ \hline
    Precision $\theta=1$ & 0     & 0.20   & 0.25  & 0.17  & 0.15 \\ \hline
    Recall $\theta=1$ & 0     & 0.20   & 0.60   & 0.20   & 0.40 \\ \hline
    f1-score $\theta=1$ & 0     & 0.20   & 0.35  & 0.18  & 0.22 \\\hline
    Support $\theta=1$ & 5     & 5     & 5     & 5     & 5 \\ \hline
    \end{tabular}%
  \label{tab:addlabel}%
\end{table}%

\begin{table}[htbp]
\centering
\caption{Various estimations for \emph{training date(index)}: April 18, 2012 (10) to August 16, 2012  (95); and \emph{testing date(index)}: August 17, 2012 (96) to September 28, 2012(125)}

    \begin{tabular}{ | c | c | c | c | c | c |}          \hline
    \multicolumn{6}{|c|}{Volatility Approach}\\ \hline
     & LR & RF & Neural Network (A) & LSTM (B) & BN (C) \\ \hline
Precision $\theta=0$ & 0.39  & 0.40   & 0.42  & 0.44  & 0.39 \\ \hline
Recall $\theta=0$ & 1.00     & 1.00     & 0.83  & 1.00     & 1.00 \\ \hline
f1-score $\theta=0$ & 0.56  & 0.57  & 0.56  & 0.62  & 0.56 \\ \hline
Support $\theta=0$ & 12    & 12    & 12    & 12    & 12 \\ \hline
Precision $\theta=1$ & 0     & 1.00     & 0.71  & 1.00     & 0 \\ \hline
Recall $\theta=1$ & 0     & 0.05  & 0.26  & 0.21  & 0 \\ \hline
f1-score $\theta=1$ & 0     & 0.10   & 0.38  & 0.35  & 0 \\ \hline
Support $\theta=1$ & 19    & 19    & 19    & 19    & 19 \\ \hline
            
\multicolumn{6}{| c |}{Duration Approach} \\         \hline
 & LR & RF & Neural Network (A) & LSTM (B) & BN (C) \\ \hline
Precision $\theta=0$ & 0.39  & 0.41  & 0.40   & 0.38  & 0.39 \\ \hline
Recall $\theta=0$ & 0.85  & 0.92  & 0.46  & 0.77  & 0.85 \\ \hline        f1-score $\theta=0$ & 0.54  & 0.57  & 0.43  & 0.51  & 0.54 \\ \hline
Support $\theta=0$ & 13    & 13    & 13    & 13    & 13 \\ \hline
Precision $\theta=1$ & 0.33  & 0.50   & 0.56  & 0.40   & 0.33 \\ \hline
Recall $\theta=1$ & 0.06  & 0.06  & 0.50   & 0.11  & 0.06 \\ \hline
f1-score $\theta=1$ & 0.10  & 0.10   & 0.53  & 0.17  & 0.10 \\ \hline
Support $\theta=1$ & 18    & 18    & 18    & 18    & 18 \\ \hline

    \end{tabular}%
  \label{tab:addlabel}%
\end{table}%

\begin{table}[htbp]
  \centering
  \caption{Various estimations for \emph{training date(index)}: August 23, 2012 (100) to February 14, 2013  (220); and \emph{testing date(index)}: February 15, 2013  (221) to April 15, 2013 (260). }
  
\begin{tabular}{| c| c | c | c |c|c|c|}          \hline
 \multicolumn{6}{|c|}{Volatility Approach}\\ \hline
 & LR & RF & Neural Network (A) & LSTM (B) & BN (C) \\ \hline
Precision $\theta=0$ & 0.76  & 0.89  & 0.82  & 0.96  & 0.97 \\ \hline
Recall $\theta=0$ & 0.91  & 0.97  & 0.28  & 0.72  & 0.94 \\ \hline
f1-score $\theta=0$ & 0.83  & 0.93  & 0.42  & 0.82  & 0.95 \\ \hline
Support $\theta=0$ & 32    & 32    & 32    & 32    & 32 \\ \hline
Precision $\theta=1$ & 0.00     & 0.83  & 0.23  & 0.47  & 0.80 \\ \hline
Recall $\theta=1$ & 0.00     & 0.56  & 0.78  & 0.89  & 0.89 \\ \hline
f1-score $\theta=1$ & 0.00     & 0.67  & 0.36  & 0.62  & 0.84 \\ \hline
Support $\theta=1$ & 9     & 9     & 9     & 9     & 9 \\ \hline
\multicolumn{6}{| c |}{Duration Approach} \\         \hline
Precision $\theta=0$ & 0.45  & 0.50   & 0.92  & 0.48  & 0.57 \\ \hline
Recall $\theta=0$ & 0.68  & 0.77  & 0.50   & 0.45  & 0.59 \\ \hline
f1-score $\theta=0$ & 0.55  & 0.61  & 0.65  & 0.47  & 0.58 \\ \hline
Support $\theta=0$ & 22    & 22    & 22    & 22    & 22 \\ \hline
Precision $\theta=1$ & 0.12  & 0.29  & 0.62  & 0.40   & 0.50 \\ \hline
Recall $\theta=1$ & 0.05  & 0.11  & 0.95  & 0.42  & 0.47 \\ \hline
f1-score $\theta=1$ & 0.07  & 0.15  & 0.75  & 0.41  & 0.49 \\ \hline
Support $\theta=1$ & 19    & 19    & 19    & 19    & 19 \\ \hline
    \end{tabular}%
  \label{tab:addlabel}%
\end{table}%

\begin{table}[htbp]
  \centering
  \caption{Various estimations for \emph{training date(index)}: August 21,2013 (350) to March 27,2014  (500); and \emph{testing date(index)}: March 28,2014  (501) to May 23,2014 (540).}
    \begin{tabular}{|c | c | c | c |c|c|c|}           \hline
     \multicolumn{6}{|c|}{Volatility Approach}\\ \hline
& LR & RF & Neural Network (A) & LSTM (B) & BN (C) \\ \hline
Precision $\theta=0$ & 0.73  & 0.75  & 0.77  & 0.86  & 0.84 \\ \hline
Recall $\theta=0$ & 1.00     & 1.00     & 0.67  & 0.83  & 0.87 \\ \hline
f1-score $\theta=0$ & 0.85  & 0.86  & 0.71  & 0.85  & 0.85 \\ \hline
Support $\theta=0$ & 30    & 30    & 30    & 30    & 30 \\ \hline
Precision $\theta=1$ & 0     & 1.00     & 0.33  & 0.58  & 0.60 \\ \hline
Recall $\theta=1$ & 0     & 0.09  & 0.45  & 0.64  & 0.55 \\ \hline
f1-score $\theta=1$ & 0     & 0.17  & 0.38  & 0.61  & 0.57 \\ \hline
Support $\theta=1$ & 11    & 11    & 11    & 11    & 11 \\ \hline
        \multicolumn{6}{| c |}{Duration Approach} \\         \hline
Precision $\theta=0$ & 0.46  & 0.49  & 0.35  & 0.48  & 0.58 \\ \hline
Recall $\theta=0$ & 0.95  & 1.00     & 0.32  & 0.79  & 0.74 \\ \hline
f1-score $\theta=0$ & 0.62  & 0.66  & 0.33  & 0.60   & 0.65 \\ \hline
Support $\theta=0$ & 19    & 19    & 19    & 19    & 19 \\ \hline
Precision $\theta=1$ & 0.50   & 1.00     & 0.46  & 0.60   & 0.71 \\ \hline
Recall $\theta=1$ & 0.05  & 0.09  & 0.50   & 0.27  & 0.55 \\ \hline
f1-score $\theta=1$ & 0.08  & 0.17  & 0.48  & 0.37  & 0.62 \\ \hline
Support $\theta=1$ & 22    & 22    & 22    & 22    & 22 \\ \hline
    \end{tabular}%
  \label{tab:addlabel}%
\end{table}%


\pagebreak
\section{Conclusion}
\label{sec4}

Management of oil revenue is risky in recent years as the volatility of oil prices has increased significantly in the last several years. Firms and organizations deal with these risks in different ways.  A refined version of the major tractable stochastic model- the BN-S model- is implemented in the present paper for minimizing the quadratic hedging error. As shown in this paper, there are certain advantages of this model relative to traditional and conventional appropriates. The theoretical results are implemented for the data analysis of the Bakken oil price. But, the procedure and analysis presented in this paper, in principle, can also be performed to other financial commodities. The procedure presented in this paper also shows a data science driven approach to deal with the stochastic models for the commodity market. It is shown that a data science driven approach can be used to effectively modify stochastic models. The resulting model can be enacted to better analyze the commodity markets. 

The two approaches discussed in the data analysis section of this paper are attempts to identify crash-like days. At the same time, it portrays the potential of merging the data science with stochastic models. In this paper, we apply various supervised and deep learning techniques to identify $\theta$ by working in conjunction with realized volatility and duration of drawdown of oil prices, respectively. Nonetheless, there is still room for further refinement of these discussed approaches. For the \emph{Volatility Approach}, rather than simply looking at the \emph{maximum realized volatility return in percentage} over a period of twenty consecutive days in Step-4 of this particular approach, one can look into the \emph{mean of the realized volatility return in percentage} over twenty consecutive days, or look at the highest positive jump in \emph{realized volatility} over five years. These will result in two different approaches. As it can be observed from this paper, data science driven approaches, and especially deep learning techniques can be a valuable resource into effective modification and efficient analysis of the stochastic models for the commodity market.   
 \\

\textbf{DATA AVAILABILITY STATEMENT}:  The data that support the findings of this study are available on request from the corresponding author. The data are not publicly available due to privacy or ethical restrictions.  \\

\textbf{Acknowledgment}: The authors would like to thank the anonymous reviewers for their careful reading of the manuscript and for suggesting points to improve the quality of the paper.




\end{document}